\newcommand{\abs}[1]{\left\lvert#1\right\rvert}				
\renewcommand{\a}{\alpha}
\newcommand{\g}{\gamma}
\renewcommand{\o}{\omega}
\renewcommand{\b}{\beta}
\renewcommand{\d}{\partial}
\newcommand{\ldr}[1]{\langle #1\rangle}
\newcommand{\R}[0]{\mathbb{R}}							
\newcommand{\C}[0]{\mathbb{C}}							
\theoremstyle{plain}
\newtheorem{thm}{Theorem}[section]
\newtheorem{lemma}[thm]{Lemma}
\newtheorem{cor}[thm]{Corollary}
\theoremstyle{definition}
\newtheorem{definition}[thm]{Definition}
\newtheorem{remark}[thm]{Remark}
\newtheorem{mainthm}{Main Result}
\author{Oliver Lindblad Petersen}
\title{The mode solution of the wave equation in Kasner spacetimes and redshift}
\address{Universität Potsdam, Institut für Mathematik, Karl-Liebknecht-Str. 24-25, 14476 Potsdam OT Golm, Germany}
\urladdr{http://www.math.uni-potsdam.de/professuren/geometrie/personen/oliver-lindblad-petersen/}
\email{lindblad@uni-potsdam.de}
\begin{document}
\hbadness=100000
\vbadness=100000

\begin{sloppypar}

\begin{abstract}
We study the mode solution to the Cauchy problem of the scalar wave equation $\Box \varphi = 0$ in Kasner spacetimes. As a first result, we give the explicit mode solution in axisymmetric Kasner spacetimes, of which flat Kasner spacetimes are special cases. Furthermore, we give the small and large time asymptotics of the modes in general Kasner spacetimes. Generically, the modes in non-flat Kasner spacetimes grow logarithmically for small times, while the modes in flat Kasner spacetimes stay bounded for small times. For large times, however, the modes in general Kasner spacetimes oscillate with a polynomially decreasing amplitude. This gives a notion of large time frequency of the modes, which we use to model the wavelength of light rays in Kasner spacetimes. We show that the redshift one obtains in this way actually coincides with the usual cosmological redshift.
\end{abstract}

\keywords{general relativity \and wave equation \and Kasner spacetime \and redshift}
\subjclass[2010]{Primary 83C15; Secondary 34C11}

\maketitle
\tableofcontents

\section{Introduction}

Let throughout the paper  
\begin{align*}
	M :=& \R_+ \times \R^3, \\
	g :=& -dt^2 + \sum_{j=1}^3 t^{2p_j} (dx^j)^2,
\end{align*}
where
\begin{align}
	\sum_{j=1}^3 {p_j}^2 &= 1, \label{eq: P-L2} \\
	\sum_{j=1}^3 p_j &= 1. \label{eq: P-L1}
\end{align}
$(M, g)$ is called a \emph{Kasner spacetime}. 
The Kasner spacetimes are the non-trivial Bianchi type I spacetimes satisfying Einstein's vacuum equations, i.\,e. ${\mathrm{Ric}(g) = 0}$. 
One observes that Kasner spacetimes are spatially anisotropic, they might "expand" in one direction and "shrink" in another direction. It is clear that at least one $p_j$ must satisfy
\[		0 \leq p_j \leq 1.
\]
If some $p_j = 1$, then the other two must equal $0$. In this case, the Kasner spacetime is flat. If all $p_j \neq 1$, then all $p_j$ are non-zero and the Kasner spacetime is non-flat. (See also \cite[page 197]{Stephani}.) The purpose of this paper is to study solutions to the wave equation $\Box \varphi = 0$ on Kasner spacetimes. Given a smooth function $\varphi$ with spatially compact support, define 
\[
	\a_\o(t):= F(\varphi(t,\cdot))(\o),
\]
where $F$ is the Fourier transform on $\{t \} \times \R^3$ and $\o \in \R^3$. Since the metric only depends on time, the wave equation
\[
	\Box \varphi = 0
\]
transforms into linear second order ODE's for $\a_\o$. We call $\a_\o$ a \emph{mode solution} to the wave equation. In case of axisymmetry, $\a_\o$ can be written down explicitly. 

\begin{mainthm}[Explicit solutions] \label{mainthm: ExplicitSolutions}
If the Kasner spacetime is \emph{axisymmetric}, i.\,e.\,if two of the $p_i$'s are equal, then $\a_\o$ can be given explicitly in terms of Bessel functions and Heun Biconfluent functions. 
\end{mainthm}

See Theorem \ref{thm: Explicit1} and Theorem \ref{thm: Explicit2} for the precise statement. This extends results of \cite{ScalarWaveEquation}. In general, it seems like there is no easy way to write down the solutions explicitly. We therefore study the asymptotics of $\a_\o$ for $t \to 0$ and $t \to \infty$.

\begin{mainthm}[Small time asymptotics] \label{mainthm: SmallTimes}
If the Kasner spacetime is non-flat, then there are constants $c_1, c_2 \in \C$, depending on $\o$, such that
\[
 \a_\o(t) - \left(c_1 \ln(t) + c_2 \right) \to 0,
\]
as $t \to 0$. If it is flat, then for a generic $\o = (\o_1, \o_2, \o_3) \in \R^3$, there are constants $c_1, c_2 \in \C$, depending on $\o$, such that
\[
  \a_\o(t) - \left(c_1 e^{2 \pi i \o_1 \ln(t)} + c_2 e^{- 2 \pi i \o_1 \ln(t)}\right) \to 0,
\]
as $t \to 0$. 
\end{mainthm}

\begin{mainthm}[Large time asymptotics] \label{mainthm: LargeTimes}
For any $t_0 > 0$, there are constants $c_1, c_2 \in \C$, depending on $\o$, such that
\[
 \a_\o(t) \left(\sum_{j=1}^3 {\o_j}^2t^{2-2p_j} \right)^{1/4} - \left(c_1 e^{2 \pi i \int_{t_0}^t f_\o(u)du} + c_2 e^{-2 \pi i \int_{t_0}^t f_\o(u)du}\right) \to 0,
\]
as $t \to \infty$, where $f_\o(t) := \left( \sum_{j=1}^3 \frac{{\o_j}^2}{t^{2p_j}} \right)^{1/2}$.
\end{mainthm}

In particular, we deduce bounds on the amplitudes:
\[
 \abs{\a_\o(t)} \leq \frac {\abs{c_1} + \abs{c_2} + 1} {\left(\sum_{j=1}^3 {\o_j}^2t^{2-2p_j} \right)^{1/4}}
\]
for large times $t$. The proofs of Main Result \ref{mainthm: SmallTimes} and \ref{mainthm: LargeTimes} are presented in Section \ref{section: SmallTimes} and \ref{section: LargeTimes}.

We conclude that the modes start to oscillate for large times $t$ and we call $f_\o$ the \emph{large time frequency} for the mode $\a_\o$. Interpreting this as the frequency of a light ray with linear momentum $\o \in \R^3$, we get a notion of \emph{large time redshift}. As an application, we show that the large time redshift coincides with the classical notion of redshift, defined to be inverse proportional to the energy of lightlike geodesics. 

If $p_1 = p_2 = p_3 = \frac23$, then $(M,g)$ is called an Einstein-de Sitter spacetime (this is not a Kasner metric). This spacetime has a singularity structure at $t=0$ similar to the non-flat Kasner spacetime. In \cite[Theorem 1.1]{Galastian}, solutions to the wave equation are given that grow like $\frac 1t$ for $t \to 0$. Hence they grow faster than the modes in Main Result \ref{mainthm: SmallTimes}, which grow like $\ln(t)$ for $t \to 0$. 

The wave equation $\square \varphi = 0$ has been studied on various backgrounds like Robertson-Walker spacetimes, see e.\,g.\,\cite{AbbasiCraig,Ariane,KlainermanSarnak} and in de Sitter and anti-de Sitter spacetimes, see e.\,g.\,\cite{Polarski}. It has also been studied extensively on black hole spacetimes, see for example \cite{DafermosRodnianski3}. However, results on the wave equation in Kasner spacetimes seem hard to find in the literature. The only reference known to the author is \cite{ScalarWaveEquation}, where certain integral operators for the solution are constructed.

The paper is structured as follows. In Section \ref{sec: Kasner}, we define mode solutions in Kasner spacetimes. Section \ref{sec: Explicit} is devoted for the explicit solutions in axisymmetric Kasner spacetimes. The small and the large time asymptotics are presented in Section \ref{sec: SmallTimes} and Section \ref{sec: LargeTimes}. The application to redshift is given in Section \ref{sec: Redshift}.
\vspace{0.3cm}

\noindent
\textbf{Acknowledgements}
The author would like to thank Christian B\"ar for introducing him to the topic of mode solutions of wave equations and Hans Ringstr\"om for introducing him to Kasner spacetimes. He especially would like to thank Andreas Hermann for many helpful discussions. Moreover, he would like to thank the Berlin Mathematical School and Sonderforschungsbereich 647 funded by Deutsche Forschungsgemeinschaft for financial support.

\section{The wave equation in Kasner spacetimes} \label{sec: Kasner}

The d'Alembert operator in a Kasner spacetime is given by
\[
 \square = \d_t^2 + \frac1{t}\d_t - \sum_{j=1}^3 \frac{1}{t^{2p_j}} \d_j^2.
\]
In order to fix the notation, let us define the Fourier transform.

\begin{definition}[Fourier transform]
For $f \in L^1(\R^n)$, define the \emph{Fourier transform of $f$} as
\[
 F(f)(\omega) := \int_{\R^n} f(x)e^{2\pi i x \cdot \o}dx,
\]
for all $\omega \in \R^3$.
\end{definition}

The solution to the wave equation is decomposed into Fourier modes as follows.
Let $t_0 \in \R_+$ and $\varphi_0, \varphi_n \in C_c^\infty(\{t_0\} \times \R^3)$ be given. By \cite[Theorem 3.2.11]{Baer} there is a unique solution $\varphi: M \to \R$, solving the Cauchy problem
\[	\square \varphi = 0, \ \varphi |_{\{t_0\} \times \R^3} = \varphi_0, \ \d_t\varphi |_{\{t_0\} \times \R^3} = \varphi_n. 
\]
Let us write the solution as
 \[
  \varphi(t,x) = \int_{\R^3} \a_\o(t)e^{-2\pi i x \cdot \o}d\o
 \]
for all $(t,x) \in M$, where $\a_\o:\R_+ \rightarrow \C$ is the unique solution to 
\begin{align}
	&\a_{\o}''(t) + \frac{\a_\o'(t)}{t} + \a_\o(t) 4 \pi^2 \sum_{j=1}^3 \frac{\o_j^2}{t^{2p_j}} = 0, \ \forall t \in \R_+,\label{eq: AlphaODE}		\\
	&\a_\o(t_0) = \int_{\R^3} \varphi_0(x) e^{2\pi i \o \cdot x} dx, \label{eq: AlphaInitialValue}	\\
	&\a'_\o(t_0) = \int_{\R^3} \varphi_n(x) e^{2\pi i \o \cdot x} dx.\label{eq: AlphaInitialDerivative}
\end{align}

\begin{definition}
The set of all $\a_\o$ in the above lemma, i.\,e.\,the solutions to (\ref{eq: AlphaODE} - \ref{eq: AlphaInitialDerivative}) for different $\o \in \R^3$, is called the \emph{mode solution to the wave equation in Kasner spacetimes}. For a fixed $\o \in \R^3$, the solution $\a_\o$ is called a \emph{mode}.
\end{definition}

From now on, we will study the modes $\a_\o$ for a fixed $\o \in \R^3$. It is convenient to rewrite the ODE's as follows. 

\begin{lemma} \label{le: AlphaToBeta}
The solution $\a_\o$ in the previous theorem can be written as
\[
  \a_\o(t) =: \b_\o(\ln(t))
\]
where $\b_\o: \R \to \C$ is the unique solution to
\begin{align}
	&\b_\o''(s) + \b_\o(s) K_\omega(s) = 0 \label{eq: BetaODE}, \\
	&\b_\o(\ln(t_0)) = \int_{\R^3} \varphi_0(x) e^{2\pi i \o \cdot x} dx, \label{eq: BetaInitialValue}	\\
	&\b'_\o(\ln(t_0)) = t_0 \int_{\R^3} \varphi_n(x) e^{2\pi i \o \cdot x} dx.\label{eq: BetaInitialDerivative}
\end{align}
where $K_\omega(s) := 4 \pi^2 \sum_{j=1}^3 {\o_j}^2 e^{(2-2p_j)s}$.
\end{lemma}

\section{The explicit modes in axisymmetric Kasner spacetimes} \label{sec: Explicit}

Let us give the explicit solutions to equations (\ref{eq: AlphaODE} - \ref{eq: AlphaInitialDerivative}) for axisymmetric Kasner spacetimes. This subclass includes the flat Kasner spacetimes. Up to permutation of the indices, there are two possibilities. 

\begin{remark}
If $p = (p_1,p_2,p_3)$ satisfies \eqref{eq: P-L2}, \eqref{eq: P-L1} and two of the $p_i$ are equal, then
\begin{align*}
 &\{p_1, p_2, p_3\} = \{1,0,0\}  \quad \text{(flat Kasner metric)} \\
 \text{ or } \quad &\{p_1, p_2, p_3\} = \left\{-\frac{1}{3}, \frac{2}{3}, \frac{2}{3} \right\}.
\end{align*}
\end{remark}

\subsection{The explicit mode solution in flat Kasner spacetimes}

Choose
\[	
  p_1=1, \quad p_2 = p_3 = 0,
\]
the other cases are analogous. Equation \eqref{eq: AlphaODE} becomes
\[
  \a_{\o}''(t) + \frac{\a_{\o}'(t)}{t} + 4 \pi^2 \a_{\o}(t) \left({\o_1}^2t^{-2} + {\o_2}^2 + {\o_3}^2\right) = 0,
\]
for all $t \in \R_+$. We will see that for generic $\o \in \R^3$, the solution can be expressed by Bessel functions. The Bessel functions of first and second kind, usually denoted by $J_\nu, Y_\nu:\C \to \C$ for $\nu \in \C$ are linearly independent solutions to the \emph{Bessel equation}
\begin{equation} \label{def: Bessel}
	x^2y''(x) + xy'(x) + (-\nu^2+x^2)y(x) = 0.
\end{equation}
See e.\,g.\,\cite{AbramowitzStegun} for the definitions and basic properties of Bessel functions. 

The solution for ${\o_2}^2 + {\o_3}^2 \neq 0$, one can find in \cite[Section II]{ScalarWaveEquation}. We complete the list by adding the solution when $\o_2=\o_3= 0$.

\begin{thm}[The explicit solution in flat Kasner spacetimes] \label{thm: Explicit1}
Let $(M,g)$ be a flat Kasner spacetime with $p_1 = 1, p_2 = p_3 = 0$. The solution to equations (\ref{eq: AlphaODE} - \ref{eq: AlphaInitialDerivative}) is given by the following, for the different cases of $\o = (\o_1, \o_2, \o_3) \in \R^3$:
\begin{itemize}
 \item[•] $\o_1 = \o_2 = \o_3 = 0$:
\[
  \a_\o (t) = \a_\o'(t_0) \ln\left(\frac{t}{t_0}\right)t_0 + \a_\o(t_0),
\]
\item[•] $\o_1 \neq 0, \ \o_2 = \o_3 = 0$:
\[
  \a_\o (t) = c_1 e^{2\pi i \o_1 \ln(t)} + c_2 e^{-2 \pi i \o_1 \ln(t)},
\]
 \item[•] ${\o_2}^2 + {\o_3}^2 \neq 0$:
\[	
  \a_\o(t) = c_1 J_{2i \pi \o_1} \left(2 \pi t \sqrt{\o_2^2 + \o_3^2} \right) + c_2 Y_{2 i \pi \o_1}\left( 2\pi t \sqrt{\o_2^2 + \o_3^2} \right),
\]
\end{itemize}
where $c_1, c_2 \in \C$ are constants depending on the initial data given in \eqref{eq: AlphaInitialValue} and \eqref{eq: AlphaInitialDerivative}.
\end{thm}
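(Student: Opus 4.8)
The plan is to split into the three cases according to which of the frequencies vanish, since the governing ODE
\[
  \a_{\o}''(t) + \frac{\a_{\o}'(t)}{t} + 4 \pi^2 \a_{\o}(t) \left({\o_1}^2t^{-2} + {\o_2}^2 + {\o_3}^2\right) = 0
\]
degenerates into qualitatively different equations in each case. In every case I would first produce the general solution as a two-parameter family, and then fix the constants by imposing the initial data \eqref{eq: AlphaInitialValue} and \eqref{eq: AlphaInitialDerivative} at $t_0$. Since \eqref{eq: AlphaODE} is a linear second order ODE with coefficients smooth on $\R_+$, two linearly independent solutions suffice, so the only genuine content is identifying them.

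For $\o_1 = \o_2 = \o_3 = 0$ the equation collapses to $\a_\o'' + t^{-1}\a_\o' = 0$, which I would rewrite as $\tfrac{d}{dt}\!\left(t\,\a_\o'\right) = 0$. Integrating twice gives $\a_\o(t) = c_1\ln(t) + c_2$, and matching the data yields $c_1 = t_0\,\a_\o'(t_0)$ and $c_2 = \a_\o(t_0) - t_0\,\a_\o'(t_0)\ln(t_0)$, which is exactly the stated formula. For $\o_1 \neq 0$, $\o_2 = \o_3 = 0$, multiplying by $t^2$ turns the equation into the equidimensional (Euler) equation $t^2\a_\o'' + t\a_\o' + 4\pi^2\o_1^2\,\a_\o = 0$; the ansatz $\a_\o = t^r$ gives the indicial equation $r^2 + 4\pi^2\o_1^2 = 0$ with roots $r = \pm 2\pi i\,\o_1$, which are distinct because $\o_1 \neq 0$, hence the independent solutions $e^{\pm 2\pi i\,\o_1\ln(t)}$.

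The main work is the case $\o_2^2 + \o_3^2 \neq 0$, where the plan is to reduce the full ODE to the Bessel equation \eqref{def: Bessel} by rescaling the independent variable. Setting $x := 2\pi t\sqrt{\o_2^2 + \o_3^2}$ and $y(x) := \a_\o(t)$, the chain rule gives $\a_\o' = 2\pi\sqrt{\o_2^2+\o_3^2}\,y'$ and $\a_\o'' = 4\pi^2(\o_2^2+\o_3^2)\,y''$; substituting and using $4\pi^2(\o_2^2+\o_3^2) = x^2/t^2$ to absorb the scaling, the equation becomes $x^2 y'' + x y' + \left(4\pi^2\o_1^2 + x^2\right)y = 0$. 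Comparing with \eqref{def: Bessel}, this is the Bessel equation with $-\nu^2 = 4\pi^2\o_1^2$, i.e. $\nu = 2i\pi\o_1$. Since $J_\nu$ and $Y_\nu$ are linearly independent solutions of \eqref{def: Bessel}, the general solution is $y = c_1 J_{2i\pi\o_1}(x) + c_2 Y_{2i\pi\o_1}(x)$, which transforms back to the claimed expression.

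The step I expect to require the most care is the change of variable in the last case: one must track the factor $2\pi\sqrt{\o_2^2+\o_3^2}$ through both derivatives so that the constant frequency term $4\pi^2(\o_2^2+\o_3^2)$ collapses precisely to the $x^2$ of the Bessel equation while $4\pi^2\o_1^2 t^{-2}$ becomes the constant $4\pi^2\o_1^2$ after multiplying by $x^2$. Once this normalization is verified, linear independence is inherited from that of $J_\nu, Y_\nu$, and the determination of $c_1, c_2$ from the initial data is a routine $2\times 2$ linear solve.
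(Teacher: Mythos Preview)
Your proposal is correct and is precisely the ``simple verification'' that the paper leaves to the reader: splitting into the three cases, integrating the Euler-type equations directly in the first two, and reducing the third to Bessel's equation \eqref{def: Bessel} via the rescaling $x = 2\pi t\sqrt{\o_2^2+\o_3^2}$. The identification $-\nu^2 = 4\pi^2\o_1^2$, hence $\nu = 2i\pi\o_1$, is exactly right, and your tracking of the scaling factors is accurate.
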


\begin{proof}
The proof is a simple verification.
\end{proof}

\subsection{The explicit mode solution in axisymmetric non-flat Kasner spacetimes}

Let
\[
  p_1 = -\frac{1}{3}, \quad p_2 = p_3 = \frac{2}{3},
\]
the other cases are analogous. Equation \eqref{eq: AlphaODE} becomes
\[
	\a_{\o}''(t) + \frac{\a_{\o}'(t)}{t} + 4 \pi^2 \a_{\o}(t) \left({\o_1}^2t^{2/3} + ({\o_2}^2 + {\o_3}^2)t^{-4/3}\right) = 0,
\]
for all $t \in \R_+$. 

In addition to the Bessel functions, the solution will be expressed in terms of a so called "Heun Biconfluent function". The definition of the Heun Biconfluent functions follows the documentation of the computer algebra system Maple, which in turn refers to \cite{Heun}. See also \cite{SpectralHeun} for a discussion of the Heun Biconfluent function. The literature list of \cite{SpectralHeun} includes several examples from physics and mathematics where this function is used and its properties discussed.

\begin{definition}[Heun Biconfluent function] \cite{Heun}
The \emph{Heun Biconfluent function} $\mathrm{HeunB}(\a, \b, \g, \delta, \cdot ): \C \to \C$ of the fixed constants $\a, \b, \g, \delta \in \C$, $\a \neq -1$ is defined as the unique solution to
\[
  xy''(x) - \left(\b x+ 2 x^2 - \a - 1\right) y'(x) - \frac{1}{2}((2 \a - 2 \g + 4) x + \b \a + \b + \delta)y(x) = 0,
\]
such that 
\[
  y(0) = 1, \quad y'(0) = \frac{\a \b + \b + \delta}{2 \a + 2}.
\]
\end{definition}

We need the Heun Biconcluent function only in the case where all constants vanish except $\delta$. In this case
\[
\mathrm{HeunB}(0, 0, 0, \delta, \cdot ): \C \to \C
\]
is the unique solution to
\[
  xy''(x) + \left(1 - 2x^2\right) y'(x) - \left(2 x + \frac{\delta}2 \right)y(x) = 0,
\]
such that 
\[
  y(0) = 1, \quad y'(0) = \frac \delta 2.
\]

\begin{thm}[The explicit solution in non-flat axisymmetric Kasner spacetimes] \label{thm: Explicit2}
Let $(M,g)$ be a (non-flat) Kasner spacetime with $p_1 = - \frac{1}{3}, p_2 = p_3 = \frac{2}{3}$. The solution to equations (\ref{eq: AlphaODE} - \ref{eq: AlphaInitialDerivative}) is given by the following for the different cases of $\o = (\o_1, \o_2, \o_3) \in \R^3$:
\begin{itemize}
 \item[•] $\o_1 = \o_2 = \o_3 = 0$:
\[
  \a_\o (t) = \a_\o'(t_0) \ln\left(\frac{t}{t_0}\right)t_0 + \a_\o(t_0),
\]
 \item[•] $\o_1 \neq 0, \ \o_2=\o_3 = 0 $:
\[
  \a_\o (t) = c_1 J_0\left(\frac{3}{2}\pi \o_1 t^{4/3}\right) + c_2Y_0\left(\frac{3}{2}\pi \o_1 t^{4/3} \right),
\]
 \item[•] $\o_1 = 0, \ {\o_2}^2 + {\o_3}^2 \neq  0$:
\[		
  \a_\o (t) = c_1J_0 \left(6\pi \sqrt{{\o_2}^2 + {\o_3}^2}t^{1/3}\right) + c_2Y_0 \left(6\pi \sqrt{{\o_2}^2 + {\o_3}^2}t^{1/3}\right),
\]
 \item[•] $\o_1 \neq 0, \ {\o_2}^2 + {\o_3}^2 \neq 0$:
\begin{align*}
	\a_\o(t) = & e^{\frac32\pi i \abs{\o_1} t^{4/3}} \mathrm{HeunB} \left(0,0,0, \delta_\o, L_\o t^{2/3} \right) \\
	& \cdot \left( \int_{t_0}^t \frac{e^{-3\pi i \abs{\o_1} u^{4/3}}}{\mathrm{HeunB} \left(0,0,0, \delta_\o, L_\o u^{2/3} \right)^2 u} du \cdot c_1 + c_2 \right),
\end{align*}
\end{itemize}
where 
\begin{align*}
  &\delta_\o := -18 \pi^2 \frac{{\o_2}^2 + {\o_3}^2}{L_\o}, \\
  &L_\o := \sqrt{\frac32\pi \abs{\o_1}} (1+i),
\end{align*}
and $c_1, c_2 \in \C$ are constants depending on initial data \eqref{eq: AlphaInitialValue} and \eqref{eq: AlphaInitialDerivative}.
\end{thm}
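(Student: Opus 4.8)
The four cases split into three elementary reductions and one genuinely new computation. The plan is to dispatch the first three cases by reducing \eqref{eq: AlphaODE} to Bessel's equation, and to treat the fourth by a two-step substitution that brings \eqref{eq: AlphaODE} into the Heun Biconfluent normal form, followed by reduction of order. Throughout, the two free constants are fixed from \eqref{eq: AlphaInitialValue} and \eqref{eq: AlphaInitialDerivative} by evaluating the candidate solution and its derivative at $t_0$.

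For $\o_1=\o_2=\o_3=0$ the equation is $\a_\o''+\a_\o'/t=0$; rewriting it as $(t\a_\o')'=0$ and integrating twice gives the stated logarithm. For the two cases in which exactly one of $\o_1$ and $\o_2^2+\o_3^2$ vanishes, \eqref{eq: AlphaODE} has the form $\a_\o''+\frac1t\a_\o'+C\,t^{2\mu-2}\a_\o=0$ with a single power of $t$ in the potential ($\mu=4/3$, $C=4\pi^2\o_1^2$ in the first; $\mu=1/3$, $C=4\pi^2(\o_2^2+\o_3^2)$ in the second). Such an equation is reduced to Bessel's equation \eqref{def: Bessel} of order $\nu=0$ by the change of variable $x=\frac{\sqrt C}{\mu}t^{\mu}$; reading off $\frac{\sqrt C}{\mu}$ in each case reproduces exactly the arguments $\tfrac32\pi\o_1 t^{4/3}$ and $6\pi\sqrt{\o_2^2+\o_3^2}\,t^{1/3}$, so the general solution is the stated combination of $J_0$ and $Y_0$.

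The remaining case $\o_1\neq 0$, $\o_2^2+\o_3^2\neq 0$ has both $t^{2/3}$ and $t^{-4/3}$ present, so no single Bessel substitution works. First I would strip off the fast-growing part of the potential by the gauge substitution $\a_\o(t)=e^{\mp\frac32\pi i\abs{\o_1}t^{4/3}}h(t)$, the exponent being chosen so that the square of its derivative cancels $4\pi^2\o_1^2 t^{2/3}$ exactly; the resulting equation for $h$ retains only the $t^{-2/3}$ and $t^{-4/3}$ potentials together with a first-order term. I would then rescale the independent variable by $x=L_\o t^{2/3}$ and verify that $h$, viewed as a function of $x$, solves the normal form $xy''+(1-2x^2)y'-(2x+\delta_\o/2)y=0$. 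Matching the coefficient of $y'$ forces $L_\o^2=3\pi i\abs{\o_1}$, i.e.\ $L_\o=\sqrt{\tfrac32\pi\abs{\o_1}}\,(1+i)$, and matching the coefficient of $y$ then forces $\delta_\o=-18\pi^2(\o_2^2+\o_3^2)/L_\o$. This identifies $h$ with $\mathrm{HeunB}(0,0,0,\delta_\o,L_\o t^{2/3})$ and hence produces one explicit solution $\a_1$ of \eqref{eq: AlphaODE}.

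To obtain a second, linearly independent solution I would use reduction of order. Since the coefficient of $\a_\o'$ in \eqref{eq: AlphaODE} is $1/t$, Abel's identity gives the Wronskian of any two solutions as a constant multiple of $1/t$, so $\a_2=\a_1\int\frac{du}{u\,\a_1(u)^2}$ is a second solution; because $\a_1(u)^2$ carries an exponential factor $e^{\mp 3\pi i\abs{\o_1}u^{4/3}}$, its reciprocal reproduces precisely the exponential appearing in the displayed integrand. The general solution is then the two-parameter family obtained as $\a_1$ times $\big(c_1\int_{t_0}^t(\cdots)\,du+c_2\big)$, and $c_1,c_2$ are determined from \eqref{eq: AlphaInitialValue} and \eqref{eq: AlphaInitialDerivative} because the integral vanishes at $t=t_0$ and the Wronskian does not. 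I expect the main obstacle to be the middle step: carefully tracking the complex square roots and signs through the combined gauge-plus-rescaling substitution so that the reduced equation lands exactly on the Heun Biconfluent normal form with the asserted $L_\o$ and $\delta_\o$ (the branch of the square root, equivalently the sign in the exponential prefactor, must be chosen consistently with the sign of $\delta_\o$).
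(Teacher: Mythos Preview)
Your proposal is correct and is precisely the kind of direct verification the paper has in mind: the paper's own proof is the single sentence ``The proof is a straightforward verification,'' and you have supplied that verification in detail. Your reduction of the three degenerate cases to Bessel's equation of order~$0$ via $x=\frac{\sqrt{C}}{\mu}t^{\mu}$, the gauge substitution $\a_\o=e^{\frac32\pi i\abs{\o_1}t^{4/3}}h$ followed by the rescaling $x=L_\o t^{2/3}$ to hit the HeunB normal form, and the use of Abel's identity (Wronskian $\propto 1/t$) for reduction of order all check out and reproduce exactly the constants $L_\o$ and $\delta_\o$ in the statement.
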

\begin{proof}
The proof is a straightforward verification.
\end{proof}

\section{Small time asymptotics of the modes in general Kasner spacetimes} \label{section: SmallTimes} \label{sec: SmallTimes}

\begin{thm}[The asymptotics for small times] \label{thm: SmallTimes}
Let $(M,g)$ be a Kasner spacetime and let $\o \in \R^3$. Assume first that if some $p_j = 1$, then $\o_j = 0$. Then there exist constants $c_1, c_2 \in \C$, depending on $\o$, such that
\[
 \a_\o(t) - (c_1 \ln(t) + c_2) \to 0
\]
as $t \to 0$. If $p_j = 1$ and $\o_j \neq 0$, then
\[
  \a_\o(t) - \left(c_1 e^{2 \pi i \o_j \ln(t)} + c_2 e^{- 2 \pi i \o_j \ln(t)} \right) \to 0
\]
as $t \to 0$, for some constants $c_1, c_2 \in \C$ depending on $\o$.
\end{thm}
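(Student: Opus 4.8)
The plan is to pass to the rescaled equation of Lemma \ref{le: AlphaToBeta} and perform asymptotic integration as $s = \ln(t) \to -\infty$. Writing $\a_\o(t) = \b_\o(\ln(t))$, it suffices to analyse $\b_\o'' + K_\omega \b_\o = 0$ for $s \to -\infty$, where $K_\omega(s) = 4\pi^2\sum_{j=1}^3 \omega_j^2 e^{(2-2p_j)s}$. The key structural observation is that every exponent satisfies $2 - 2p_j \ge 0$ (since $p_j \le 1$), with equality exactly when $p_j = 1$. Under the first hypothesis there is no surviving $p_j = 1$ term, so all surviving exponents are strictly positive and $K_\omega(s) \to 0$ exponentially; under the second hypothesis there is a single index $j$ with $p_j = 1$ (forcing the other two to be $0$), and $K_\omega(s) \to \lambda^2 := 4\pi^2\omega_j^2$ exponentially, with remainder $r(s) := K_\omega(s) - \lambda^2$ a sum of terms decaying like $e^{2s}$. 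In both regimes I would compare $\b_\o$ with the solutions of the constant-coefficient limit equation $\b'' + L\b = 0$, $L \in \{0, \lambda^2\}$.

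Consider first $L = 0$, where the limiting solutions are $1$ and $s$. The crucial input is the \emph{weighted} integrability $\int_{-\infty}^{s_0} |s|\, K_\omega(s)\, ds < \infty$, valid because $|s| e^{(2-2p_j)s}$ is integrable at $-\infty$ whenever $2 - 2p_j > 0$. A Gronwall estimate on the integral form of the equation first yields the a priori bounds $\b_\o(s) = O(|s|)$ and $\b_\o'(s) = O(1)$. Integrating $\b_\o'' = -K_\omega \b_\o$ from $s$ to $s_0$ gives $\b_\o'(s) = \b_\o'(s_0) + \int_s^{s_0} K_\omega \b_\o\, d\sigma$, whose right-hand side converges as $s \to -\infty$ to a constant $c_1$ because $K_\omega |\b_\o|$ is integrable. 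A second use of the weighted bound (via Fubini) shows $\b_\o' - c_1 \in L^1(-\infty, s_0)$, so $\b_\o(s) - c_1 s$ converges to a constant $c_2$ and the remainder tends to $0$. Hence $\b_\o(s) = c_1 s + c_2 + o(1)$, i.e. $\a_\o(t) - (c_1 \ln(t) + c_2) \to 0$.

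For $L = \lambda^2 > 0$ the limiting solutions $e^{\pm i\lambda s}$ are bounded, so only $\int_{-\infty}^{s_0} |r(s)|\, ds < \infty$ is needed (no weight), which is immediate since $r$ decays like $e^{2s}$. Gronwall again gives that $\b_\o$ is bounded. I would then represent the solution by variation of parameters with the kernel $\sin(\lambda(s-\sigma))/\lambda$ and push the lower limit of integration to $-\infty$; splitting $\sin(\lambda(s-\sigma))$ into its $e^{\pm i\lambda s}$ pieces isolates constants $c_1, c_2$ multiplying $e^{i\lambda s}$ and $e^{-i\lambda s}$, while the leftover is bounded by $\lambda^{-1}\int_{-\infty}^{s}|r \b_\o|\, d\sigma$, the tail of a convergent integral, hence $o(1)$. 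Since $\lambda = 2\pi|\omega_j|$ and $s = \ln(t)$, this reads $\a_\o(t) - (c_1 e^{2\pi i\omega_j \ln(t)} + c_2 e^{-2\pi i\omega_j \ln(t)}) \to 0$, the sign of $\omega_j$ being irrelevant as both exponentials occur.

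The main obstacle is the precision demanded in the case $L = 0$. Generic asymptotic-integration theorems would deliver only $\b_\o(s) = c_1 s + o(s)$, whereas the statement requires the error, after subtracting $c_1 \ln(t) + c_2$, to \emph{vanish}. This forces me to extract the subleading constant $c_2$ and to control the remainder by hand, and it is precisely here that the stronger weighted bound $\int |s|\, K_\omega\, ds < \infty$, rather than merely $\int K_\omega\, ds < \infty$, is indispensable. The remaining care is in making the Gronwall a priori estimates clean enough (linear growth of $\b_\o$, boundedness of $\b_\o'$) to feed into this two-step convergence argument.
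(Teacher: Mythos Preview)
Your proposal is correct, and the overall architecture---pass to $\b_\o$ via $s=\ln(t)$, identify the constant-coefficient limit equation, and do asymptotic integration toward $s\to-\infty$---matches the paper. The technical execution differs in two places, and the comparison is instructive.

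For the case $L=0$, the paper does not use a Gronwall/linear-growth bound on $\b_\o$. Instead it exploits the monotonicity of the energy $E(s):=\b_\o'(s)^2+K_\o(s)\b_\o(s)^2$: since $K_\o$ is increasing, $E$ is nondecreasing, so $K_\o(s)\b_\o(s)^2\le C$ for all $s\le s_c$. This yields directly $|\b_\o''(s)|\le K_\o(s)|\b_\o(s)|\le\sqrt{C}\sqrt{K_\o(s)}$, and the whole argument then runs on the single hypothesis $\sqrt{K_\o}\in L^1(-\infty,s_c)$---no weighted bound $\int|s|K_\o<\infty$ is needed, and no separate a priori estimate on $\b_\o$ or $\b_\o'$ enters. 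Your Gronwall route is perfectly valid (exponential decay of $K_\o$ absorbs any polynomial weight), but the energy trick is shorter and uses a strictly weaker integrability condition.

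For the flat case with $\o_j\neq 0$, the paper takes the opposite tack: it simply invokes the explicit Bessel-function solution from Theorem~\ref{thm: Explicit1} and the known small-argument asymptotics of $J_\nu,Y_\nu$ for imaginary $\nu$. Your variation-of-parameters argument against $e^{\pm i\lambda s}$ with $r\in L^1$ is more self-contained and would apply even without the explicit solution, at the cost of a few more lines.
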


\begin{remark}[Optimal bound on the growth for small times]
We claim that the constants $c_1$ and $c_2$ in the previous theorem cannot be set to zero in general. Theorem \ref{thm: Explicit2} implies that if $p_1 = -\frac13$ and $p_2 = p_3 = \frac23$ and $\o_2 = \o_3 = 0$, the solution is given by
\[
		\a_\o (t) = \tilde c_1 J_0\left(\frac{3}{2}\pi \o_1 t^{4/3}\right) + \tilde c_2 Y_0\left(\frac{3}{2}\pi \o_1 t^{4/3} \right),
\]
where $\tilde c_1, \tilde c_2 \in \C$. If we choose initial data so that $\tilde c_1 = 0 \neq \tilde c_2$, then $\abs{\a_\o(t)} \to \infty$ as $t \to 0$, so we cannot set $c_1 = 0$ in general. If we choose the initial data so that $\tilde c_1 \neq 0 = \tilde c_2$, then $\a_\o(t) \to \tilde c_1$ as $t \to 0$, so $c_2$ does not vanish in general. An analogous argument holds in the flat Kasner case, using Theorem \ref{thm: Explicit1}. Hence the result of Theorem \ref{thm: SmallTimes} is optimal.
\end{remark}

We proceed by proving the theorem.

\begin{proof}[of Theorem \ref{thm: SmallTimes}]
Since the statement is trivial if $\o = 0$, let us assume that $\o \neq 0$. Assume first that $(M,g)$ is a non-flat Kasner spacetime or that $(M,g)$ is a flat Kasner spacetime with $\o_j = 0$ for the index $j$ such that $p_j = 1$. Recall that by Lemma 1, equation \ref{eq: AlphaODE} is equivalent to equation \ref{eq: BetaODE}. The above assumptions are exactly what is needed for $K_\o(s)$ to decay exponentially as $s \to -\infty$. The crucial observation in the proof is that
\begin{align*}
 \frac d{ds}\left( \b_\o'(s)^2 + K_\o(s)\b_\o(s)^2 \right) 
  &= 2\b_\o'(s) \big( \b_\o''(s) + K_\o(s) \b_\o(s) \big) + K_\o'(s)\b_\o(s)^2 \\
  &= K_\o'(s)\b_\o(s)^2 \geq 0,
\end{align*}
since $K_\o$ is everywhere increasing. Hence
\begin{equation} \label{eq: EnergyEstimate}
 \b_\o'(s_1)^2 + K_\o(s_1)\b_\o(s_1)^2 \leq \b_\o'(s_2)^2 + K_\o(s_2)\b_\o(s_2)^2
\end{equation}
for any $s_1 \leq s_2$. Fix $s_c \in \R$, and define
\[
 C := \b_\o'(s_c)^2 + K_\o(s_c)\b_\o(s_c)^2.
\]
The estimate \eqref{eq: EnergyEstimate} implies that
\[
 K_\o(s)\b_\o(s)^2 \leq C
\]
for all $s \leq s_c$ and hence
\[
 \abs{\b_\o''(s)} \leq K_\o(s)\abs{\b_\o(s)} \leq \sqrt{K_\o(s)} \sqrt C,
\]
for all $s \leq s_c$. Since $K_\o$ decays exponentially for $s \to -\infty$, this implies that $\b_\o'' \in L^1(-\infty, s_c)$. Therefore, as $s \to - \infty$,
\[
 \b_\o'(s) = - \int_{s}^{s_c} \b_\o''(u) du + \b_\o'(s_c) \to -\int_{-\infty}^{s_c} \b_\o''(u)du + \b_\o'(s_c) =: c_1 \in \R.
\]

We now claim that there exists a $c_2 \in \R$ such that
\[
 \b_\o(s) - s c_1 \to c_2
\]
as $s \to -\infty$. To see this, note that
\[
 \b_\o'(s) - c_1 = \int_{-\infty}^s \b_\o''(u)du \leq \int_{-\infty}^s \abs{\b_\o''(u)}du \leq \sqrt C \int_{-\infty}^s \sqrt{K_\o(u)}du
\]
decays exponentially as $s \to -\infty$. In particular,
\[
 \b_\o'(s) - c_1 \in L^1(-\infty, s_c)
\]
and hence
\begin{align*}
 \b_\o(s) - c_1s
  &= -\int_{s}^{s_c} \b_\o'(u)du + \b_\o(s_c) - c_1s \\
  &= -\int_{s}^{s_c} \big[\b_\o'(u) - c_1 \big]du - c_1s_c + \b_\o(s_c)\\
  &\to -\int_{-\infty}^{s_c} \big[\b_\o'(u) - c_1 \big]du - c_1s_c + \b_\o(s_c) =: c_2,
\end{align*}
as $s \to -\infty$.

Reformulating the convergence using 
\[
 \b_\o(\ln(t)) = \a_\o(t)
\]
implies that
\[
 \a_\o(t) - c_1 \ln(t) \to c_2
\]
as $t \to 0$.

For the second statement, it is enough to check the result in Theorem \ref{thm: Explicit1} and to recall the small time asymptotics for Bessel functions with imaginary parameter $\nu$, see e.\,g.\,\cite{AbramowitzStegun}.
\end{proof}

\section{Large time asymptotics of the modes in general Kasner spacetimes}\label{section: LargeTimes} \label{sec: LargeTimes}

\begin{thm}[The asymptotics for large times] \label{thm: LargeTimes}
Let $(M,g)$ be a Kasner spacetime and let $\o \neq 0 \in \R^3$. Then there exist constants $c_1, c_2 \in \C$, depending on $\o$, such that
\[
 \a_\o(t) \left(\sum_{j=1}^3 {\o_j}^2t^{2-2p_j} \right)^{1/4} - \left[ c_1 e^{2 \pi i \int_{t_0}^t f_\o(u)du} + c_2 e^{-2 \pi i \int_{t_0}^t f_\o(u)du} \right] \to 0,
\]
 as $t \to \infty$, where
\[
f_\o(t) := \left(\sum_{j=1}^3 \frac{{\o_j}^2}{t^{2p_j}}\right)^{1/2}.
\]
In particular, there exists a $T \geq 0$ such that
\begin{equation} \label{eq: BoundAmplitude}
 \abs{\a_\o(t)} \leq \frac{\abs{c_1} + \abs{c_2} + 1}{\left(\sum_{j=1}^3 {\o_j}^2 t^{2-2p_j}\right)^{1/4}}
\end{equation}
for all $t \geq T$.
\end{thm}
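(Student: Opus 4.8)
The plan is to treat the large-time regime by a rigorous Liouville--Green (WKB) analysis of the transformed equation \eqref{eq: BetaODE}. Recall that $\a_\o(t) = \b_\o(\ln t)$ and that $\b_\o$ solves $\b_\o'' + K_\o \b_\o = 0$ with $K_\o(s) = 4\pi^2\sum_j \o_j^2 e^{(2-2p_j)s}$. Since $\o \neq 0$, the coefficient $K_\o$ is strictly positive, smooth, and non-decreasing (its derivative is a non-negative combination of exponentials, exactly as exploited in the proof of Theorem \ref{thm: SmallTimes}), so the equation is of oscillatory type for large $s$, and I expect solutions of the form $K_\o^{-1/4} e^{\pm i \int \sqrt{K_\o}}$.

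First I would introduce the Liouville--Green variables
\[
\xi(s) := \int_{s_0}^s \sqrt{K_\o(u)}\,du, \qquad W(\xi) := K_\o(s)^{1/4}\,\b_\o(s),
\]
with base point $s_0 := \ln t_0$, and compute the equation satisfied by $W$. A direct (if slightly tedious) differentiation shows that the first-derivative terms cancel and one is left with the perturbed harmonic oscillator
\[
\frac{d^2 W}{d\xi^2} + W = \phi\, W, \qquad \phi := \frac14\frac{K_\o''}{K_\o^2} - \frac{5}{16}\frac{(K_\o')^2}{K_\o^3},
\]
where the primes on $K_\o$ denote $d/ds$. The entire argument hinges on showing that $\phi$ is integrable in the variable $\xi$, i.e.\ that $\int_{s_0}^\infty |\phi(u)|\sqrt{K_\o(u)}\,du < \infty$.

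This integrability is the step I expect to be the main obstacle. Writing $K_\o = \sum_j a_j e^{\lambda_j s}$ with $a_j = 4\pi^2\o_j^2 \geq 0$ and $\lambda_j = 2-2p_j \in [0, \tfrac83]$, the relevant ratios are controlled termwise: $K_\o' \leq \lambda_{\max} K_\o$ and $K_\o'' \leq \lambda_{\max}^2 K_\o$, where $\lambda_{\max}$ is the largest exponent occurring with $a_j \neq 0$. Hence $|\phi|\sqrt{K_\o} \leq C\,K_\o^{-1/2}$. One then splits into two cases. If there is an index $j$ with $\o_j \neq 0$ and $p_j < 1$ (so $\lambda_j > 0$), then $K_\o$ grows exponentially, $K_\o^{-1/2}$ is integrable on $(s_0, \infty)$, and the integrability of $\phi$ follows. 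The only remaining possibility, given $\o \neq 0$, is the degenerate flat case in which the frequency is carried entirely by a direction with $p_j = 1$; there $K_\o$ is constant, $\phi \equiv 0$, and the equation for $W$ is solved exactly, matching the second bullet of Theorem \ref{thm: Explicit1}.

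With $\int|\phi|\,d\xi < \infty$ established, I would invoke the standard Liouville--Green asymptotic theorem for the oscillatory equation $W_{\xi\xi} + (1-\phi)W = 0$ (proved, if desired, by recasting it as a Volterra integral equation against the kernel $\sin(\xi-\eta)$ and running a Gronwall/fixed-point estimate that uses $\int|\phi| < \infty$): there exist $c_1, c_2 \in \C$ with $W(\xi) = c_1 e^{i\xi} + c_2 e^{-i\xi} + o(1)$ as $\xi \to \infty$. Finally I would translate back. A short computation gives $\sqrt{K_\o}\,ds = 2\pi f_\o(t)\,dt$, so that $\xi = 2\pi\int_{t_0}^t f_\o$, while $K_\o^{1/4} = (4\pi^2)^{1/4}\big(\sum_j\o_j^2 t^{2-2p_j}\big)^{1/4}$. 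Substituting $W = K_\o^{1/4}\a_\o$ and absorbing the constant $(4\pi^2)^{1/4}$ into $c_1, c_2$ yields exactly the claimed asymptotics. The amplitude bound \eqref{eq: BoundAmplitude} then follows immediately: for $t$ large enough that the $o(1)$ term has modulus at most $1$, the right-hand side is bounded by $|c_1| + |c_2| + 1$, and dividing by $\big(\sum_j \o_j^2 t^{2-2p_j}\big)^{1/4}$ gives the stated inequality.
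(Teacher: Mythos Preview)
Your proposal is correct and is essentially the same approach as the paper's: the paper black-boxes the Liouville--Green/WKB step into the cited Lemma~\ref{le: MS10} (whose hypothesis $\psi_{K_\o}=K_\o^{-1/4}(K_\o^{-1/4})''\in L^1$ is exactly your condition $\int|\phi|\sqrt{K_\o}\,ds<\infty$, since $\phi\sqrt{K_\o}=-\psi_{K_\o}$), and then verifies this integrability by the same reduction to $K_\o^{-1/2}\in L^1$ that you carry out. The only cosmetic difference is that the paper bounds the ratios $K_\o''/K_\o$ and $(K_\o')^2/K_\o^2$ by factoring out the dominant exponential rather than using your termwise inequality $K_\o'\le\lambda_{\max}K_\o$, but the content and case split (exponential growth versus constant) are identical.
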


Theorem \ref{thm: LargeTimes} implies that the amplitudes of the modes are monotone decreasing, but also that the modes start to oscillate as trigonometric functions. For trigonometric functions, we have a natural definition of frequency. We define this frequency as the `large time frequency` of the modes. 

\begin{definition}[Large time frequency of $\a_\o$] \label{def: LargeTimeFrequency}
We define the \emph{large time frequency of a mode $\a_\o$} to be $f_\o$ as in Theorem \ref{thm: LargeTimes}.
\end{definition}

\begin{cor}[Large time frequency in flat and non-flat Kasner spacetimes]\ 
\begin{itemize}
 \item[•] Let $(M, g)$ be a flat Kasner spacetime with $p_j =1$. Then 
 \[
  f_\o(t) \to \sqrt{{\o_k}^2 + {\o_l}^2}, \ \text{ as } t \to \infty, \ k,l \neq j.
 \]
 as $t \to \infty$.
 \item[•] Let $(M, g)$ be a non-flat Kasner spacetime, with $-\frac13 \leq p_k <0$. If $\o_k \neq 0$, then
 \[
  f_\o(t) \to \infty
 \]
 as $t \to \infty$, otherwise
 \[
 f_\o(t) \to 0
  \]
  as $t \to \infty$.
\end{itemize}
\end{cor}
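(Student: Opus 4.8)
The plan is to reduce everything to a term-by-term limit of the quantity under the square root. Writing
\[
 f_\o(t)^2 = \sum_{j=1}^3 \o_j^2 t^{-2p_j},
\]
this is a finite sum of monomials in $t$, and the asymptotics as $t \to \infty$ of each summand are governed solely by the sign of the exponent $-2p_j$: the term tends to $0$ if $p_j > 0$, equals the constant $\o_j^2$ if $p_j = 0$, and tends to $+\infty$ if $p_j < 0$. Since the square root is continuous on $[0,\infty)$ and $\sqrt{x} \to \infty$ as $x \to \infty$, it suffices to determine the limit of $f_\o^2$ and then take the square root. All limits below are as $t \to \infty$.

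In the flat case with $p_j = 1$, the remaining two exponents satisfy $p_k = p_l = 0$, as recorded in the introduction. Then $f_\o(t)^2 = \o_j^2 t^{-2} + \o_k^2 + \o_l^2$, in which only the first summand varies, and it tends to $0$. Hence $f_\o(t)^2 \to \o_k^2 + \o_l^2$ and $f_\o(t) \to \sqrt{\o_k^2 + \o_l^2}$, as claimed.

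For the non-flat case I would first pin down the sign pattern of the exponents, which I expect to be the only step requiring an argument. From $\sum_j p_j^2 = 1$ each $\abs{p_j} \le 1$; if all $p_j$ were non-negative then, since all three are non-zero in the non-flat case, the inequality $\sum_j p_j^2 < \big(\sum_j p_j\big)^2 = 1$ would contradict \eqref{eq: P-L2}, so at least one exponent is negative; and if two were negative the third would exceed $1$, contradicting $\abs{p_j} \le 1$. Thus exactly one exponent, call it $p_k$, is negative, and since all exponents are non-zero in the non-flat case the other two are strictly positive. (This reproduces the hypothesis $-\frac13 \le p_k < 0$.)

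With this sign pattern the result is immediate. If $\o_k \neq 0$, the summand $\o_k^2 t^{-2p_k}$ has positive exponent $-2p_k > 0$ and tends to $+\infty$, while the other two summands tend to $0$; hence $f_\o(t)^2 \to \infty$ and $f_\o(t) \to \infty$. If $\o_k = 0$, that summand vanishes identically and every surviving summand $\o_l^2 t^{-2p_l}$ has $p_l > 0$ and tends to $0$; hence $f_\o(t)^2 \to 0$ and $f_\o(t) \to 0$. This is exactly the dichotomy in the statement, and apart from the elementary bookkeeping above the only genuine content is the sign argument for the Kasner exponents.
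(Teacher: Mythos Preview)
Your proof is correct; the paper does not supply a proof of this corollary at all, treating it as an immediate consequence of the explicit formula $f_\o(t)=\big(\sum_j \o_j^2 t^{-2p_j}\big)^{1/2}$. Your term-by-term analysis is exactly the obvious verification the paper omits, and your sign argument for the Kasner exponents in the non-flat case (exactly one negative, the other two strictly positive) is a clean justification of a fact the paper already records in the introduction.
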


It follows that the large time frequency, for a generic choice of $\o \in \R^3$, converges to a constant in flat Kasner spacetimes and goes to infinity in non-flat Kasner spacetimes. We proceed by proving the theorem. For this we will use the following lemma.

\begin{lemma} \label{le: MS10}

Assume that $K: (a,\infty) \to \R_+$ is smooth and $v:(a,\infty) \to \R$ satisfies
\begin{equation} \label{eq: ProofsComplexSolution}
  v''(s) + K(s)v(s) = 0, \quad \forall s \in (a, \infty).
\end{equation}
Assume furthermore that 
\[		
\psi_K := K^{-\frac{1}{4}} \frac{d^2}{ds^2} \left(K^{-\frac{1}{4}} \right) \in L^1(a,\infty)
\]
and that
\[
 K^{1/2} \notin L^1(a, \infty).
\]
 Then there exist solutions $v_1, v_2: (a, \infty) \to \C$ to equation \eqref{eq: ProofsComplexSolution} such that 
\begin{align*}
v_1(s)e^{-i\int_a^s K(u)^{1/2} du} K(s)^{1/4} \rightarrow 1 \\
v_2(s)e^{i\int_a^s K(u)^{1/2} du} K(s)^{1/4} \rightarrow 1
\end{align*}
when $s \rightarrow \infty$.
\end{lemma}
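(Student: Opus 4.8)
The plan is to prove this by the classical Liouville--Green (WKB) method: transform \eqref{eq: ProofsComplexSolution} into a perturbation of the harmonic oscillator $\ddot w + w = 0$, whose solutions are $e^{\pm i\xi}$, and then control the perturbation through a Volterra integral equation solved by successive approximation. First I would introduce the Liouville change of variable. Since $K>0$ is smooth, $\xi(s) := \int_a^s K(u)^{1/2}\,du$ is a smooth strictly increasing function, and the hypothesis $K^{1/2}\notin L^1(a,\infty)$ ensures $\xi(s)\to\infty$ as $s\to\infty$, so $\xi$ is a diffeomorphism of $(a,\infty)$ onto $(0,\infty)$. Setting $w := K^{1/4}v$ and writing $\dot{}=d/d\xi$, a direct computation shows that $v''+Kv=0$ becomes
\[ \ddot w + (1+\phi)w = 0, \qquad \phi := \tfrac{5}{16}K^{-3}(K')^2 - \tfrac14 K^{-2}K'', \]
where $'=d/ds$; the first-order term cancels precisely because of the $K^{1/4}$ weight.

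The decisive observation is that the perturbation $\phi$ is integrable in the new variable exactly under the stated hypothesis. Since $d\xi = K^{1/2}\,ds$ and
\[ \phi\, K^{1/2} = \tfrac{5}{16}K^{-5/2}(K')^2 - \tfrac14 K^{-3/2}K'' = K^{-1/4}\tfrac{d^2}{ds^2}\!\left(K^{-1/4}\right) = \psi_K, \]
we get $\int \phi\, d\xi = \int \psi_K\, ds$. Hence $\psi_K\in L^1(a,\infty)$ is equivalent to $\phi\in L^1$ with respect to $\xi$, which is exactly what the perturbation argument requires. I would verify this identity carefully, as it is the one step that links the abstract hypothesis to the convergence mechanism.

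Next I would recast the problem as an integral equation. Using variation of parameters against the fundamental system $e^{\pm i\xi}$ of $\ddot w + w = 0$, the solution with the desired behaviour at $+\infty$ is characterised by
\[ w(\xi) = e^{i\xi} - \int_\xi^\infty \sin(\tau-\xi)\,\phi(\tau)\,w(\tau)\,d\tau, \]
and conversely a bounded solution of this equation solves the transformed ODE. I would solve it by the iteration $w_0 := e^{i\xi}$ and $w_{n+1}(\xi) := -\int_\xi^\infty \sin(\tau-\xi)\,\phi(\tau)\,w_n(\tau)\,d\tau$. With $\Phi(\xi) := \int_\xi^\infty \abs{\phi}\,d\tau$, which is finite, tends to $0$ as $\xi\to\infty$, and satisfies $\Phi'=-\abs{\phi}$, the bound $\abs{\sin}\le 1$ together with $\abs{\phi}\Phi = -\tfrac12(\Phi^2)'$ yields by induction the factorial estimate $\abs{w_n(\xi)}\le \Phi(\xi)^n/n!$. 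Thus the Neumann series converges absolutely and uniformly, producing a genuine solution $w$ with
\[ \abs{w(\xi)-e^{i\xi}} \le e^{\Phi(\xi)}-1 \to 0, \]
i.e.\ $w(\xi)e^{-i\xi}\to 1$. Undoing the substitution, $v_1 := K^{-1/4}w$ solves \eqref{eq: ProofsComplexSolution} and satisfies $v_1(s)e^{-i\int_a^s K^{1/2}\,du}K^{1/4}\to 1$; replacing $e^{i\xi}$ by $e^{-i\xi}$ throughout (equivalently, taking complex conjugates) produces $v_2$, and the two have distinct leading asymptotics so they form a fundamental system.

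The hard part is not conceptual but a matter of careful bookkeeping: getting the Liouville-transformed equation exactly right so that the first-order term vanishes and the remainder is precisely $\phi$, and then establishing the identity $\phi\, K^{1/2}=\psi_K$ that converts the hypothesis into $L^1$-control of $\phi$ in the $\xi$ variable. Once that identity is secured, the factorial bound for the Picard iteration and the passage back to $v_1,v_2$ are routine.
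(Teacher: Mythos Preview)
Your argument is correct. The Liouville substitution $\xi=\int_a^s K^{1/2}\,du$, $w=K^{1/4}v$ does eliminate the first-order term and yields $\ddot w+(1+\phi)w=0$ with $\phi=\tfrac{5}{16}K^{-3}(K')^2-\tfrac14 K^{-2}K''$; the identity $\phi\,K^{1/2}=\psi_K$ checks out (both sides equal $\tfrac{5}{16}K^{-5/2}(K')^2-\tfrac14 K^{-3/2}K''$), so the hypothesis $\psi_K\in L^1(a,\infty)$ is exactly $\phi\in L^1$ with respect to $d\xi$. The Volterra equation with kernel $\sin(\tau-\xi)$ and the factorial Picard bound $\abs{w_n}\le \Phi^n/n!$ are set up correctly, giving $\abs{w-e^{i\xi}}\le e^{\Phi}-1\to 0$, and the passage back to $v_1,v_2$ is immediate.

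The paper itself does not prove this lemma at all: it simply invokes \cite[Proposition~3.2]{MS10}. Your proposal therefore supplies a self-contained argument where the paper outsources to a reference. The Liouville--Green/WKB scheme you use is the classical route to such results and is presumably close in spirit to what Metafune--Sobajima do (their title is ``An elementary proof of asymptotic behaviour of solutions of $u''=Vu$''), so there is no genuine divergence of strategy to compare---only citation versus detailed proof. The benefit of your version is that the paper becomes independent of that preprint.
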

\begin{proof}
This is a special case of \cite[Proposition 3.2]{MS10}.
\end{proof}

We apply this result to our case.

\begin{proof}[of Theorem \ref{thm: LargeTimes}]
Recall, by Lemma \ref{le: AlphaToBeta}, that equation \eqref{eq: AlphaODE} is equivalent to equation \eqref{eq: BetaODE}. We claim that the assumptions in Lemma \ref{le: MS10} are satisfied with $K = K_\o$. Since $\o \neq 0$, note that $K_\o$ is nonzero and constant or grows exponentially as $s \to \infty$. If $K_\o$ is a nonzero constant, the assumptions in Lemma \ref{le: MS10} are trivially satisfied. Assume therefore that $K_\o$ is exponentially growing. Note that
\begin{align*}
 \psi_{K_\o}(s) 
    &= K_\o(s)^{-1/4} \frac{d^2}{ds^2} \left(K_\o(s)^{-1/4} \right) \\
    &= \frac{1}{4K_\o(s)^{1/2}} \left( \frac{5}{4} \frac{K_\o'(s)^2}{K_\o(s)^2} - \frac{K_\o''(s)}{K_\o(s)} \right).
\end{align*}
We claim that the term $\frac{K_\o''(s)}{K_\o(s)}$ is bounded. Let $k$ be an index such that $\o_k \neq 0$ and $p_k \leq p_j$ for all $j$ such that $\o_j \neq 0$. By factoring out $e^{(2-2p_k)s}$ from both numerator and denominator, we see that
\[
  \frac{K_\o''(s)}{K_\o(s)} = \frac{\sum_{j=1}^3 (2-2p_{j})^2 {\o_j}^2 e^{(2-2p_{j})s}}{\sum_{j=1}^3 {\o_j}^2 e^{(2-2p_{j})s}} = \frac{\sum_{j=1}^3 (2-2p_j)^2 {\o_j}^2 e^{2(p_k-p_j)s}}{\sum_{j=1}^3 \o_j^2 e^{2(p_k-p_j)s}}
\]
Since $p_k - p_j \leq 0$, we see that both denominator and numerator converge as $s \rightarrow \infty$. Furthermore, the denominator is bounded from below by ${\o_k}^2$. This implies that the quotient converges when $s \to \infty$. Analogously, one proves that the term $\frac{K_\o'(s)^2}{K_\o(s)^2}$
converges and it follows that
\[
\frac{5}{4} \frac{K_\o'(s)^2}{K_\o(s)^2} - \frac{K_\o''(s)}{K_\o(s)} 
\]
is bounded as $s \to \infty$. Hence it suffices to show that $K_\o^{-1/2} \in L^1(a, \infty)$, in order to prove that $\psi_{K_\o} \in L^1(a, \infty)$. But this is clear, since $K_\o$ grows exponentially as $s \to \infty$. What remains in order to apply Lemma \ref{le: MS10} is to show that $K_\o^{1/2} \notin L^1(a, \infty)$. Again, this is clear, since $K_\o$ grows exponentially.

We can therefore apply Lemma \ref{le: MS10} and conclude that there exist solutions $\b_\o^1, \b_\o^2: (a, \infty) \to \C$ to equation \eqref{eq: BetaODE} such that
\begin{align*}
  &\b_\o^1(s)e^{i\int_a^s K_\o(u)^{1/2}du}K_\o(s)^{1/4} \to 1, \\
  &\b_\o^2(s)e^{-i\int_a^s K_\o(u)^{1/2}du}K_\o(s)^{1/4} \to 1,
\end{align*}
which is equivalent to
\begin{align}
  &\b_\o^1(s)K_\o(s)^{1/4} - e^{-i\int_a^s K_\o(u)^{1/2}du} \to 0, \label{eq: LargeTimesBeta1} \\
  &\b_\o^2(s)K_\o(s)^{1/4} - e^{i\int_a^s K_\o(u)^{1/2}du} \to 0. \label{eq: LargeTimesBeta2}
\end{align}
It follows that $\b_\o^1$ and $\b_\o^2$ are linearly independent, therefore there exist constants $c_1, c_2 \in \C$ such that
\[
 \b_\o = c_1 \b_\o^1 + c_2 \b_\o^2.
\]
We now substitute back using $\a_\o(t) = \b_\o(\ln(t))$. Note that
\begin{align*}
 K_\o(\ln(t)) &= 4 \pi^2 \sum_{j=1}^3 {\o_j}^2 t^{2-2p_j}, \\
 \int_a^{\ln(t)} K_\o(u)^{1/2}du &= \int_{e^a}^t K_\o(\ln(u))^{1/2} \frac1u du \\
  &= 2 \pi \int_{e^a}^t \left( \sum_{j=1}^3 \frac{{\o_j}^2}{u^{2p_j}} \right)^{1/2} du.
\end{align*}
Note that we can, by change of the parameters $c_1, c_2$ if necessary, choose $a = \ln(t_0)$. Inserting this in equations \eqref{eq: LargeTimesBeta1} and \eqref{eq: LargeTimesBeta2} implies the theorem.
\end{proof}

\section{Application: Redshift in Kasner spacetimes} \label{sec: Redshift}

Light rays in general relativity are modelled by lightlike geodesic. It is well-known that the wavelength of a light ray changes under the impact of a gravitational field. This phenomenon is called \emph{redshift}. In general relativity, one usually models the wavelength of a lightlike geodesic as being inverse proportional to the energy of the lightlike geodesic modelling the light ray. The question for this section is, can we instead model the wavelength of light using a mode of the wave equation? In Minkowski space, this is true, since the Maxwell equations for light simplify to the wave equation. One important difference between Kasner spacetimes and Minkowski spacetimes is the `Big Bang` at $t=0$. Therefore, one can only expect a well-defined notion of wavelength, modeled by the wave equation, for large times. We will use a notion of `large time wavelength`, simply the inverse of the `large time frequency` defined in the previous section. It is shown that the redshift one obtains using this definition of wavelength coincides with the classical notion of redshift.

Let us start by fixing a lightlike geodesic. We specify the initial data for the lightlike geodesic on the Cauchy hypersurface $\{t_0\}\times \R^3$. We want to consider a light ray $\g$ sent out at time $t_0$ at an arbitrary point $x_0$ in space with the initial (spatial) direction given by an arbitrary vector $v$. 

\subsubsection*{Fixing a lightlike geodesic}

Let $\g: I \subset \R \to M$ be the future pointing lightlike geodesic, written in coordinates as
\[		
  \g(s) = (t(s), x^1(s), x^2(s), x^3(s)),
\]
with the initial data
\begin{align*} 
  &\g(s_0) = (t_0, x_0), \\
  &\g'(s_0) = \left(\sum_{j=1}^3{v_j}^2{t_0}^{2p_j}, v \right) \in T_{(t_0, x_0)}M,
\end{align*}
for fixed $x_0, v \in \R^3$. Note that $\g$ is indeed lightlike. Since $\d_j$ for $j=1,2,3$ are Killing vector fields, the linear momentum of $\g$ is conserved,
\[
	\ldr{\d_j, \g'(s)} = {t_0}^{2p_j}v_j=: a_j \in \R.
\]

\subsubsection*{The two notions of wavelength}

Let us first present the usual way of describing wavelength of a lightlike geodesic. The redshift using this definition of wavelength is called the \emph{cosmological redshift} and is presented for Robertson-Walker spacetimes in \cite[p. 353]{O'Neill}. We generalize this classical notion to Kasner spacetimes by defining the wavelength analogously.

\begin{definition}[Wavelength inverse proportional to the energy] \label{def: FrequencyCosmological}
Let $\g: I \to M$ be the lightlike geodesic we fixed above. The \emph{wavelength using the energy} $\lambda_\g^E: I \to \R_+$ of $\g$ is defined as 
\[
  \lambda_\g^E(s) := \frac h{E_\g(s)},
\]
for all $s \in I$, where $h$ is the Planck constant and $E_\g(s)$ is the energy measured by $\d_t$.
\end{definition}

We now turn to our model of the wavelength of $\g$ as a mode $\a_\o$ of the wave equation. A natural choice of $\o$ is the linear momentum of $\g$, i.\,e.\,
\[
	\o_j := \ldr{\d_j, \g'(s)} = a_j.
\]

\begin{definition}[The large time wavelength] \label{def: LargeTimeWavelength}
Let $\g: I \subset \R \to M$ be the lightlike geodesic we fixed above. The \emph{large time wavelength} $\lambda_\g^{LT}: I \to \R_+$ of $\g$, measured by $\d_t$ is defined as 
\[
  \lambda_\g^{LT}(s) := \frac1{f_{a}(t(s))},
\]
for all $s \in I$, where $t: I \to \R_+$ is the time component of $\g$ and $f_{a}(t)$ is the large time frequency (see Definition \ref{def: LargeTimeFrequency}) of the mode $\a_{a}$, where $a$ is the (conserved) linear momentum of $\g$.
\end{definition}

\subsubsection*{Comparison of the obtained redshifts}

\begin{definition}[Redshift of the lightlike geodesic $\g$.] \label{def: Redshift}
Let $\g: I \subset \R \to M$ be the lightlike geodesic we fixed above. Let $p$ and $q$ be in the image of $\g$, and let $s_p < s_q \in I$ be such that $\g(s_p) = p$ and $\g(s_q) = q$.  Assume that the wavelength of $\g$ observed by $\d_t$ is given by a function $\lambda_\g: I \to \R_+$. Then the redshift between $p$ and $q$ relative to the observer field $\d_t$ is defined by
\[
  z_\g(p, q) := \frac{\lambda_\g(s_q) - \lambda_\g(s_p)}{\lambda_\g(s_p)}.
\]
\end{definition}

The redshift is independent of whether we choose to model the wavelength using the energy or the large time wavelength:

\begin{thm} \label{thm: Redshift}
Let $(M,g)$ be a Kasner spacetime and let $\g$ be the above fixed geodesic with initial spatial direction $v$. Let $p$ and $q$ be in the image of $\g$, and let $s_p < s_q \in I$ be such that $\g(s_p) = p$ and $\g(s_q) = q$. The redshift obtained by using $\lambda^{LT}_\g$ as wavelength coincides with the redshift obtained by using $\lambda^E_\g$ as wavelength and equals
\[
  z_\g(p, q) = \left(\frac{ \sum_{j=1}^3 \frac{{a_j}^2}{{t(s_p)}^{2p_j}}}{\sum_{j=1}^3 \frac{{a_j}^2}{{t(s_q)}^{2p_j}}}\right)^{1/2} - 1,
\]
where $t: I \to \R_+$ is the time coordinate of $\g$ and $a$ is the (conserved) linear momentum of $\g$.
\end{thm}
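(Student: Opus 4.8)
The plan is to exploit the fact that the redshift in Definition \ref{def: Redshift} is a scale-invariant ratio: if two candidate wavelength functions differ only by a multiplicative constant, they produce identical redshifts. I would therefore reduce the whole theorem to the single identity $E_\g(s) = f_a(t(s))$, after which everything follows. The large-time side is immediate from the definitions: writing $\lambda^{LT}_\g(s) = 1/f_a(t(s))$ and forming the quotient $\lambda^{LT}_\g(s_q)/\lambda^{LT}_\g(s_p)$, the ratio of the two values of $f_a$ gives exactly the claimed expression once one subtracts $1$.

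For the energy side, I would first recall that $\d_t$ is a unit timelike observer field, since $g(\d_t, \d_t) = -1$, so the energy measured by $\d_t$ is $E_\g(s) = -\ldr{\d_t, \g'(s)}$. Because the metric is diagonal with $g_{tt} = -1$, only the time component of $\g'(s)$ contributes, giving $E_\g(s) = t'(s)$. It then remains to express $t'(s)$ through the conserved linear momenta. From the Killing identity $a_j = \ldr{\d_j, \g'(s)} = t^{2p_j}\dot x^j$ one solves $\dot x^j = a_j t^{-2p_j}$, and substituting this into the null condition $0 = \ldr{\g'(s), \g'(s)} = -t'^2 + \sum_{j=1}^3 t^{2p_j}(\dot x^j)^2$ yields
\[
  t'(s)^2 = \sum_{j=1}^3 \frac{a_j^2}{t(s)^{2p_j}} = f_a(t(s))^2.
\]
Taking the positive root, valid since $\g$ is future pointing so that $t' > 0$, gives $E_\g(s) = t'(s) = f_a(t(s))$.

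Combining the two computations, $\lambda^E_\g(s) = h/E_\g(s) = h \cdot \lambda^{LT}_\g(s)$, so the two wavelength functions agree up to the constant factor $h$; this constant cancels in the redshift ratio, and both redshifts therefore equal the stated formula. The computation is entirely elementary, so the only real obstacle is bookkeeping: one must correctly fix the sign convention for the energy of a null geodesic relative to a unit observer and verify that $\d_t$ is genuinely normalized, so that the coefficient multiplying $t'$ is exactly $1$ rather than some function of $t$. Once the identity $E_\g = f_a \circ t$ is established, the remainder is formal.
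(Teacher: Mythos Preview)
Your proposal is correct and follows essentially the same route as the paper: both arguments compute $\lambda^{LT}_\g$ directly from the definition of $f_a$, then derive $E_\g(s)=t'(s)=f_a(t(s))$ from the null condition combined with the conserved momenta $a_j=t^{2p_j}\dot x^j$, and finally observe that the constant factor $h$ cancels in the redshift ratio. Your explicit remark about taking the positive root of $t'(s)^2$ because $\g$ is future pointing is a nice clarification that the paper leaves implicit.
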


\begin{proof}
Theorem \ref{thm: LargeTimes} implies that
\[
  f_{a}(t) = \left(\sum_{j=1}^3 \frac{{a_j}^2}{t^{2p_j}} \right)^{1/2},
\]
and therefore
\[
\lambda_\g^{LT}(s) = \frac1 {f_{a}(t(s))} = \frac1{\left(\sum_{j=1}^3 \frac{{a_j}^2}{t(s)^{2p_j}} \right)^{1/2}}.
\]
Inserting this into Definition \ref{def: Redshift} proves the first part. By definition of the geodesic being lightlike,
\begin{align*}
  0 =& \ldr{\g'(s), \g'(s)} \\
  =& -\left( t'(s)\right)^2 + \sum_{j=1}^3 (x^j)'(s) \ldr{\d_j, \g'(s)} \\ 
  =& -\left( t'(s)\right)^2 + \sum_{j=1}^3 \frac{\ldr{\d_j, \g'(s)}^2}{t(s)^{2p_j}} \\
  =& -\left( t'(s)\right)^2 + \sum_{j=1}^3 \frac{{a_j}^2}{t(s)^{2p_j}},
\end{align*}
for all $s \in I$. This implies that the energy is given by 
\[
  E_\g(s)= -\ldr{\g'(s), \d_t} = t'(s) = \left( \sum_{j=1}^3 \frac{{a_j}^2}{t(s)^{2p_j}} \right)^{1/2}  
\]
and hence
\[
 \lambda_\g^E(s) = \frac h {E_\g(s)}= \frac h {\left(\sum_{j=1}^3 \frac{{a_j}^2}{t(s)^{2p_j}} \right)^{1/2}},
\]
for all $s \in I$. Inserting into Definition \ref{def: Redshift} finishes the proof.
\end{proof}



\end{sloppypar}
\end{document}